\title{Linear Termination over $\pmb{\NN}$ is Undecidable}
\titlerunning{Linear Termination over $\pmb{\NN}$ is Undecidable}
\author{Fabian Mitterwallner}
{University of Innsbruck, Innsbruck, Austria}
{fabian.mitterwallner@uibk.ac.at}
{https://orcid.org/0000-0001-5992-9517}
{}
\author{Aart Middeldorp}
{University of Innsbruck, Innsbruck, Austria}
{aart.middeldorp@uibk.ac.at}
{https://orcid.org/0000-0001-7366-8464}
{}
\author{Ren\'{e} Thiemann}
{University of Innsbruck, Innsbruck, Austria}
{rene.thiemann@uibk.ac.at}
{https://orcid.org/0000-0002-0323-8829}
{}
\authorrunning{F. Mitterwallner and A. Middeldorp and R. Thiemann}
\keywords{term rewriting, polynomial termination, undecidability}
\newcommand{\h}[1][.3]{\hspace{#1mm}}
\newcommand{\m}[1]{\mathsf{#1}}
\newcommand{\x}[1]{\mathcal{#1}}
\newcommand{\seq}[2][n]{#2_1,\dotsc,#2_{#1}}
\newcommand{\SET}[1]{\{\h#1\h\}}
\newcommand{\set}[1][n]{\SET{1,\dotsc,#1}}
\newcommand{\MC}[2][0]{\makebox[#1mm]{#2}}
\newcommand{\xF}{\x{F}}
\newcommand{\xQ}{\x{Q}}
\newcommand{\xR}{\x{R}}
\newcommand{\mI}[2][\NN]{\m{#2}_{#1}}
\newcommand{\NN}{\mathbb{N}}
\newcommand{\ZZ}{\mathbb{Z}}
\newcommand{\QQ}{\mathbb{Q}}
\newcommand{\RR}{\mathbb{R}}
\newcommand{\DD}{\mathcal{D}}
\newcommand\fz{\m{z}}
\newcommand\fg{\m{g}}
\newcommand\fa{\m{a}}
\newcommand\fq{\m{q}}
\newcommand\fh{\m{h}}
\newcommand\fv[1]{\m{v}_{#1}}
\newcommand{\I}[2][\NN]{{#2}_{#1}}
\newcommand{\vI}[2][\NN]{[#2]_{#1}}
\newcommand{\ceil}[1]{{\ulcorner #1 \urcorner}}
\newcommand{\enc}[1]{\ceil{#1}}
\newcommand{\sem}[2][\NN]{\vI[#1]{#2}}
\newcommand{\nR}[1]{\,\xrightarrow{\,\MC[2]{$\scriptstyle #1$}\,}\,}
\newcommand{\lemref}[1]{Lemma~\ref{lem:#1}}
\newcommand{\thmref}[1]{Theorem~\ref{thm:#1}}
\begin{document}

\maketitle

\begin{abstract}
Recently it was shown that it is undecidable whether a term rewrite system
can be proved terminating by a polynomial interpretation in the
natural numbers. In this paper we show that this is also the case when
restricting the interpretations to linear polynomials, as is often done in
tools, and when only considering single-rule rewrite systems. What is more,
the new undecidability proof is simpler than the previous one. We further
show that polynomial termination over the rationals/reals is undecidable.
\end{abstract}

\section{Introduction}
\label{sec:introduction}

In a recent paper~\cite{MM22} the problem of whether
a finite term rewrite system (TRS) can be proved terminating by a polynomial
interpretation in the naturals numbers was shown to be undecidable. The
result was strengthened by restricting the instance to incremental
polynomially
terminating TRSs. Moreover, incremental polynomial termination over $\NN$ is
an undecidable property of terminating TRSs.

In this paper we complement these results by proving the somewhat surprising
result that the problem remains undecidable if we restrict the allowed
interpretation functions to linear ones, even for single-rule polynomially
terminating TRSs. A second contribution is that the termination problem is
undecidable if we consider polynomial interpretations over the rationals
and reals. Our undecidability proofs are surprisingly simple.

The results in this paper are obtained by a reduction
from a variant of Hilbert's tenth problem. Hilbert's tenth is one of 23
problems published by David Hilbert in 1900 which where all unsolved at
the time. To
solve the tenth problem one should find an algorithm that given a
diophantine equation with integer coefficients determines if the equation
has a solution in the integers~\cite{H02}. As is turns out this is
impossible and the underlying decision problem was proved undecidable by
Matijasevic in 1970~\cite{M70}.

To simplify the encoding of Hilbert's tenth problem, we first reduce it to
a slightly modified decision problem. Instead of using an arbitrary integer
polynomial, we consider two polynomials $P$ and $Q$ with only positive
coefficients and ask if $P(\seq{x}) \geqslant Q(\seq{x})$ for some arguments
$\seq{x} \in \NN_+$. This is also undecidable and is
more easily applicable in the proofs related to polynomial termination.

\begin{lemma}
\label{lem:DP}
The following decision problem is undecidable: \\
\begin{tabular}[t]{@{\quad}ll}
instance: & two polynomials $P$ and $Q$ with positive integer
coefficients \\
question: & $P(\seq{x}) \geqslant Q(\seq{x})$ for some $\seq{x} \in \NN_+$ ?
\end{tabular}
\end{lemma}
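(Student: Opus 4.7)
The plan is to reduce from Hilbert's tenth problem over the natural numbers: given $D \in \ZZ[\seq{x}]$, it is undecidable whether $D$ has a zero in $\NN^n$, a standard consequence of Matijasevi\v{c}'s theorem. First I shift the domain from $\NN$ to $\NN_+$ by substituting $x_i \mapsto y_i - 1$ throughout $D$, producing a polynomial $D' \in \ZZ[\seq{y}]$ whose zero-set over $\NN_+^n$ is in bijection with the zero-set of $D$ over $\NN^n$; the $\NN_+$-version of the problem therefore remains undecidable.

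Next I convert the equation $D'(y) = 0$ into a single inequality. Since $(D')^2(y) \geq 0$ for every real $y$, we have $D'(y) = 0$ iff $(D')^2(y) \leq 0$. Collecting the monomials of $(D')^2$ according to the sign of their coefficient, I write $(D')^2 = A - B$ with $A, B \in \NN[\seq{y}]$ (both having non-negative integer coefficients). The pointwise nonnegativity of $(D')^2$ amounts to $A(y) \geq B(y)$ on all of $\NN_+^n$, so $D'(y) = 0$ iff $B(y) \geq A(y)$.

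Finally, to enforce that every monomial appearing in the two sides carries a strictly positive coefficient, and to rule out the degenerate case in which $A$ or $B$ is the zero polynomial, I pad with a common linear polynomial: set $P := B + y_1 + \cdots + y_n$ and $Q := A + y_1 + \cdots + y_n$. This preserves the equivalence $P(y) \geq Q(y) \iff B(y) \geq A(y)$, and both $P$ and $Q$ are nonzero polynomials with positive integer coefficients. Composing the two reductions yields a computable many-one reduction from Hilbert's tenth problem to the decision problem in the lemma. The main obstacle is the bookkeeping around coefficient positivity, which the linear pad settles uniformly without disturbing the equivalence.
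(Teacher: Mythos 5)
Your proof is correct. The central trick is the same as the paper's: square the polynomial so that solvability of $D=0$ becomes $D^2\leqslant 0$, and split $D^2$ into a difference of two polynomials with positive coefficients so that the question becomes a single inequality between them. Where you differ is in how the domain is forced into $\NN_+$. The paper starts from Hilbert's tenth problem over $\ZZ$ and handles signs by enumerating all $3^n$ sign vectors $\seq{a}\in\SET{-1,0,1}^n$, substituting $a_ix_i$ for $x_i$; the reduction therefore produces a disjunction of $3^n$ instances of the target problem (a truth-table reduction, phrased as a proof by contradiction). You instead start from the equally standard $\NN$-version of Hilbert's tenth problem and shift variables by $x_i\mapsto y_i-1$, which yields a single instance and hence a genuine many-one reduction; your linear padding $P=B+y_1+\cdots+y_n$, $Q=A+y_1+\cdots+y_n$ also cleanly dispatches the degenerate case where one side would be the zero polynomial, a detail the paper glosses over. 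Both arguments are sound; yours is arguably tidier in that it avoids the exponential case split, at the cost of invoking the $\NN$-version of the MRDP theorem rather than the $\ZZ$-version cited in the paper.
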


\begin{proof}
We proceed by a reduction from Hilbert's 10th problem.
Assume the decision problem is decidable and let $R \in \ZZ[\seq{x}]$
be a polynomial. We can modify Hilbert's 10th problem for $R$ as follows:
\begin{align*}
&\exists\,\seq{x} \in \ZZ~R(\seq{x}) = 0 \\
\iff~&\exists\,\seq{x} \in \ZZ~R(\seq{x})^2 \leqslant 0 \\
\iff~&\exists\,\seq{a} \in \SET{-1,0,1}~\exists\,\seq{x} \in \NN_+~
R(a_1x_1,\dots,a_nx_n)^2 \leqslant 0
\tag{1}
\intertext{%
We can now split $R(a_1x_1,\dots,a_nx_n)^2$ into two polynomials
$P_{\vec a}$ and $Q_{\vec a}$ containing only positive coefficients, such
that
$R(a_1x_1,\dots,a_nx_n)^2 = Q_{\vec a}(\seq{x}) - P_{\vec a}(\seq{x})$.
Hence (1) is equivalent to
}
&\exists\,\seq{a} \in \SET{-1,0,1}~\exists\,\seq{x} \in \NN_+~
P_{\vec a}(\seq{x}) \geqslant Q_{\vec a}(\seq{x})
\end{align*}
The final problem is decidable by our assumption, since it
consists of $3^n$ instances of the decision problem. This contradicts
the undecidability of Hilbert's 10th problem, thereby proving the lemma.
\end{proof}

\section{Undecidability of Linear Termination over $\NN$}
\label{sec:undecidability}

To prove undecidability of linear termination we define a
TRS $\xR$ which is parameterized by two polynomials $P$ and $Q$
containing only positive coefficients. We then prove that $\xR$ can be
shown to be terminating using a linear interpretation if and
only of $P(\seq{x}) \geqslant Q(\seq{x})$ for some $\seq{x} \in \NN_+$.
For polynomials containing the indeterminates $\seq{v}$, the signature of
$\xR$ is $\xF = \SET{\m{z},\m{o},\m{a},\m{f},\fv{1},\dots,\fv{n}}$,
where $\m{z}$ and $\m{o}$ are constants,
$\fv{1},\dots,\fv{n}$ are unary symbols,
$\m{a}$ is a binary symbol and $\m{f}$ has arity four.

To this end we first define an encoding $\ceil{\cdot}^x$, which maps
polynomials with positive coefficients to terms containing the variable
$x$.

\begin{definition}
\label{def:encoding}
Let $P$ be a polynomial containing only positive coefficients, and the
indeterminates $\seq{v}$. We can then encode natural numbers as
\begin{align*}
\ceil{0}^x &= \m{z} & \ceil{m+1}^x &= \m{a}(x,\ceil{m}^x)
\end{align*}
A monomial $M = c \cdot v_1^{m_1} \cdot v_2^{m_2} \cdots v_n^{m_n}$
with $c \in \NN_+$ and $\seq{m} \in \NN$ is encoded as
\begin{align*}
\ceil{M}^x &= \m{v}_{\m{1}}^{m_1}(\m{v}_{\m{2}}^{m_2}( \cdots
(\m{v}_{\m{n}}^{m_n}(\ceil{c}^x)) \cdots ))
\end{align*}
where $v^0(t) = t$ and $v^{i+1}(t) = v(v^i(t))$
for $v \in \SET{\fv{1},\dots,\fv{n}}$. Finally the polynomial
$P = M_1 + M_2 + \cdots + M_k$ is encoded as
\[
\ceil{P}^x = \m{a}(\ceil{M_1}^x,\m{a}(\ceil{M_2}^x, \cdots
\m{a}(\ceil{M_k}^x,\m{z}) \cdots ))
\]
\end{definition}

\begin{example}
For the polynomial $P = X^3 + 2X + 2$ we obtain the term
\[
\ceil{P}^y =
\m{a}(\,\underbrace{\m{X}(\m{X}(\m{X}(\m{a}(y,\m{z}))))}_{\ceil{X^3}}\,,
\m{a}(\,\underbrace{\m{X}(\m{a}(y,\m{a}(y,\m{z})))}_{\ceil{2X}}\,,
\m{a}(\,\underbrace{\m{a}(y,\m{a}(y,\m{z}))}_{\ceil{2}}\,,\m{z})))
\]
\end{example}

The TRS $\xR$ can then be defined via this encoding.

\begin{definition}
For polynomials $P$ and $Q$ containing only positive coefficients we
obtain the \textup{TRS} $\xR$ consisting of the single rule
\[
\m{f}(y_1,y_2,\m{a}(\ceil{P}^{y_3},y_3),\m{o}) \to
\m{f}(\m{a}(y_1,\m{z}),\m{a}(\m{z},y_2),\m{a}(\ceil{Q}^{y_3},y_3),\m{z})
\]
\end{definition}

The rule serves two purposes. First it constrains any linear
interpretation proving its termination to conform to a very limited
shape. Secondly it uses these restricted shapes
to encode the inequality $P \geqslant Q$ in the orientation of the rule
$\vI{\ell} > \vI{r}$. This leads to the following result.

\begin{theorem}
\label{thm:linPT}
Termination of $\xR$ can be shown by a linear interpretation
if and only if $P(\seq{v}) \geqslant Q(\seq{v})$ for some
$\seq{v} \in \NN_+$.
\end{theorem}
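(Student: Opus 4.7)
The proof splits into two directions, both working by analyzing how a linear interpretation acts on the encoding $\ceil{\cdot}^x$ when applied to the single rule of $\xR$.

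For the ``only if'' direction, assume a linear interpretation orients the rule strictly, with parameters $\mI{a}(x,y) = a_1 x + a_2 y + a_0$, $(\fv{i})_\NN(x) = p_i x + q_i$, and $\mI{f}(x_1,\dots,x_4) = \sum_j f_j x_j + f_0$. Strict monotonicity over $\NN$ demands $a_1, a_2, p_i, f_j \geqslant 1$. The first step is to view $\vI{\ell}$ and $\vI{r}$ as linear polynomials in $y_1,y_2,y_3$ and match coefficients: the coefficient of $y_1$ is $f_1$ on the left and $f_1 a_1$ on the right, forcing $a_1 = 1$, and analogously $a_2 = 1$. The key technical step is then an induction on the structure of the encoding showing that, once $a_1 = a_2 = 1$, the coefficient of $x$ in $\vI{\ceil{P}^x}$ equals $P(\seq{p})$: the chain of $\m{a}$'s inside $\ceil{c}^x$ yields the factor $c$, iterated applications of $(\fv{i})_\NN$ yield the factor $p_i^{m_i}$, and the outer $\m{a}$'s (now literal addition) sum the monomial contributions. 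The $y_3$ coefficients of $\vI{\ell}$ and $\vI{r}$ are then respectively $f_3(P(\seq{p})+1)$ and $f_3(Q(\seq{p})+1)$, yielding $P(\seq{p}) \geqslant Q(\seq{p})$, with $\seq{p}$ (each $p_i \geqslant 1$) as the desired witness in $\NN_+$.

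For the converse, given $\seq{v} \in \NN_+$ with $P(\seq{v}) \geqslant Q(\seq{v})$, I would exhibit the concrete linear interpretation $\mI{z} = 0$, $\mI{o} = 1$, $\mI{a}(x,y) = x+y$, $(\fv{i})_\NN(x) = v_i\, x$, and $\mI{f}(x_1,\dots,x_4) = x_1+x_2+x_3+x_4$. The same inductive computation (now with all additive constants vanishing) yields $\vI{\ceil{P}^x} = P(\seq{v}) \cdot x$ and $\vI{\ceil{Q}^x} = Q(\seq{v}) \cdot x$. Substituting into the rule, $\vI{\ell} - \vI{r}$ collapses to $(P(\seq{v})-Q(\seq{v}))\, y_3 + 1$, which is at least $1$ and hence strictly positive for every $y_3 \in \NN$.

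The main obstacle is the encoding lemma underpinning both directions, namely the claim that the coefficient of $x$ in $\vI{\ceil{P}^x}$ equals $P(\seq{p})$. The difficulty is not the induction itself but the careful book-keeping of the additive constants arising from $a_0$, $q_i$, and $\mI{z}$; happily these contribute only to the constant term of $\vI{\ceil{P}^x}$ and so do not affect the $y_3$-coefficient comparison that yields the polynomial inequality. They only appear in the constant parts of $\vI{\ell}$ and $\vI{r}$, where the asymmetric fourth arguments $\m{o}$ on the left versus $\m{z}$ on the right of $\m{f}$ provide enough additive slack to turn a non-strict coefficient inequality into a strict rule inequality.
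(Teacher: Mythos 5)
Your proposal is correct and follows essentially the same route as the paper: the same witness interpretation for the ``if'' direction, and for the ``only if'' direction the same coefficient comparison forcing $a_1 = a_2 = 1$ followed by the same induction showing that the $x$-coefficient of $\vI{\ceil{P}^x}$ equals $P$ evaluated at the linear coefficients of the $\fv{i}$. The only difference is notational (your $p_i$ for the paper's reuse of $v_i$), and your observation that the additive constants only pollute the constant term is exactly the book-keeping the paper leaves implicit.
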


\begin{proof}
For the if direction assume $P(\seq{v}) \geqslant Q(\seq{v})$
for some $\seq{v} \in \NN_+$. We then choose the monotone interpretations
\begin{align*}
\mI{z} &= 0 & \mI{a}(x_1,x_2) &= x_1 + x_2 &
\mI{v_{\mathit{i}}}(x) &= v_i \cdot x \quad \text{for all $i \in \set$} \\
\mI{o} &= 1 & \mI{f}(x_1,x_2,x_3,x_4) &= x_1 + x_2 + \mathrlap{x_3 + x_4}
\end{align*}
Note that using this interpretation we have
$\vI{\ceil{P}^{y_3}} = P(\seq{v}) \cdot y_3$,
and the same holds for $Q$. Hence we orient the rule in $\xR$, as
seen in
\[
\vI{\ell} = y_1 + y_2 + (P(\seq{v}) + 1) y_3 + 1 >
y_1 + y_2 + (Q(\seq{v}) + 1) y_3 = \vI{r}
\]
For the only-if direction we assume a linear interpretation for all
$f \in \xF$, such that $\vI{\ell} > \vI{r}$.
Hence we know that all interpretations have the shape
$\I{f}(\seq[k]{x}) = f_0 + f_1 x_1 + \cdots + f_k x_k$
where $f_0 \in \NN$ and $\seq[k]{f} \in \NN_+$ due to monotonicity.
For any term $t$ we write $\vI{t}^{y_i}$ for the coefficient of the
indeterminate $y_i$ of the linear polynomial $\vI{t}$. Using this
notation, $\vI{\ell} > \vI{r}$ implies
$\vI{\ell}^{y_i} \geqslant \vI{r}^{y_i}$ for $i \in \SET{1,2,3}$.
By the shape of the rule we deduce
$\m{f_1} = \vI{\ell}^{y_1} \geqslant \vI{r}^{y_1} = \m{f_1} \m{a_1}$
and in combination with $\m{f_1} > 0$ and $\m{a_1} > 0$ we conclude
$\m{a_1} = 1$.
Similarly, from $\vI{\ell}^{y_2} \geqslant \vI{r}^{y_2}$ we infer
$\m{a_2} = 1$, and in turn
$\mI{a}(x_1,x_2) = x_1 + x_2 + \m{a_0}$ for some $\m{a_0} \in \NN$.
Due to the shape $\mI{a}$ it is easy to see that
$\vI{\ceil{m}^{y_3}}^{y_3} = m$ for any $m \in \NN$,
$\vI{c \cdot \ceil{v_1^{m_1} \cdots v_n^{m_n}}^{y_3}}^{y_3} =
c \cdot v_1^{m_1} \cdots v_n^{m_n}$ and further
$\vI{\ceil{P}^{y_3}}^{y_3} = P(\seq{v})$ for any polynomial $P$. Hence
\[
\m{f_3} \cdot (P(\seq{v}) + 1) = \vI{\ell}^{y_3}
\geqslant \vI{r}^{y_3} = \m{f_3} \cdot (Q(\seq{v}) + 1)
\]
Since $\m{f_3} > 0$, division by $\m{f_3}$ is possible, resulting in the
desired inequality $P(\seq{v}) \geqslant Q(\seq{v})$ for
$\seq{v} \in \NN_+$.
\end{proof}

\begin{corollary}
Linear termination is undecidable, even for single-rule \textup{TRS}s.
\end{corollary}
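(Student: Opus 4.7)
The plan is to obtain the corollary as a direct reduction from the decision problem of \lemref{DP} using \thmref{linPT}. First I would observe that the construction of $\xR$ in the previous definition is effective: given polynomials $P$ and $Q$ with positive integer coefficients (finite lists of monomials), the encodings $\ceil{P}^{y_3}$ and $\ceil{Q}^{y_3}$ are computable terms, and hence the single rule defining $\xR$ can be written down by an algorithm. So $P,Q \mapsto \xR$ is a computable many-one reduction from the decision problem of \lemref{DP} to the question whether a given single-rule \textup{TRS} admits a linear termination proof.

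Next I would invoke \thmref{linPT}, which provides the key equivalence: $\xR$ admits a linear interpretation orienting its single rule strictly (and weakly monotonically) if and only if $P(\seq{v}) \geqslant Q(\seq{v})$ for some $\seq{v} \in \NN_+$. Combined with the reduction, any decision procedure for ``linear termination of single-rule \textup{TRS}s'' would yield a decision procedure for the problem of \lemref{DP}.

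Finally I would conclude by contradiction: since that problem is undecidable by \lemref{DP}, the linear termination problem must be undecidable as well, and the restriction to single-rule systems is already enough to witness this. The main obstacle here is not in the proof itself, which is a short composition, but in ensuring that the reduction preserves the single-rule restriction and that no additional side conditions (monotonicity, positivity of coefficients) are lost in passing from the theorem to the corollary; both are already part of the statement of \thmref{linPT} and of the construction of $\xR$, so no further work is needed.
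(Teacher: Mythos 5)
Your proposal is correct and matches the paper's proof, which likewise derives the corollary directly from \thmref{linPT} and \lemref{DP}; you merely spell out the (routine) computability of the map $P,Q \mapsto \xR$ that the paper leaves implicit.
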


\begin{proof}
This follows directly from \thmref{linPT} and \lemref{DP}.
\end{proof}

Interestingly the TRS $\xR$ is always terminating, independent of the
polynomials $P$ and $Q$. This can be shown using a (non-linear)
polynomial interpretation.

\begin{lemma}
\label{lem:RisPT}
The \textup{TRS} $\xR$ is polynomially terminating.
\end{lemma}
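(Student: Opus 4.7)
The plan is to give an explicit polynomial interpretation over $\NN$. Since the fourth argument of $\m{f}$ collapses from $\m{o}$ to $\m{z}$ when moving from left to right, while the third argument can grow arbitrarily (the polynomial $Q$ may be much larger than $P$), the interpretation of $\m{f}$ must mix these two arguments \emph{multiplicatively}: any merely additive contribution of $x_4$ can never beat the unbounded increase produced by replacing $\ceil{P}$ with $\ceil{Q}$. This already indicates that the interpretation has to be non-linear, which is consistent with \thmref{linPT}.

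First I would fix the auxiliary symbols by
\begin{align*}
\mI{z} &= 0 & \mI{o} &= 1 & \mI{a}(x_1,x_2) &= x_1 + x_2 + 1 & \mI{v_i}(x) &= x + 1
\end{align*}
for every $i \in \set$. These are strictly monotone since the relevant partial derivatives all equal $1$. Together with the observations $\vI{\m{v}_i^k(t)} = \vI{t} + k$ and $\vI{\ceil{m}^x} = m(x+1)$, a short induction on the structure of a polynomial with positive coefficients yields a closed form $\vI{\ceil{P}^x} = C_P \, x + D_P$ with $C_P, D_P \in \NN_+$ depending only on the coefficients and degrees of $P$, and analogously for $Q$.

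For $\m{f}$ I would take
\[
\mI{f}(x_1, x_2, x_3, x_4) = c\, x_3 x_4 + x_1 + x_2 + x_3 + x_4,
\]
where $c \in \NN_+$ is a parameter to be fixed last. All partial derivatives are at least $1$, so $\mI{f}$ is strictly monotone. Substituting the closed forms into $\vI{\ell}$ and $\vI{r}$ reduces $\vI{\ell} - \vI{r}$ to a linear expression in $y_3$ whose two coefficients both have the shape $(c+1)\alpha - \beta$ with $\alpha \geqslant 1$ and $\beta$ a constant depending only on $P$ and $Q$. Choosing $c$ sufficiently large (a bound linear in $C_Q$ and $D_Q$ suffices) makes both coefficients strictly positive, so $\vI{\ell} > \vI{r}$ holds for every valuation of $y_1, y_2, y_3$.

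The main obstacle is finding the right shape for $\mI{f}$: additive or linear choices of $x_4$ cannot compensate for the unbounded term $C_Q \, y_3$ that appears in $\vI{r}$ when $Q$ is much larger than $P$, so the product $x_3 x_4$ is essential — it is what lets the switch $\m{o} \to \m{z}$ absorb that growth via the free parameter $c$. Once this shape is in place the rest is routine bookkeeping.
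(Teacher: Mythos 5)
Your proposal is correct and takes essentially the same approach as the paper: both rely on the non-linear interpretation $\mI{f}$ containing the product $x_3x_4$, so that the drop from $\m{o}$ to $\m{z}$ in the fourth argument absorbs the growth from $\ceil{P}$ to $\ceil{Q}$ in the third. The only cosmetic difference is that you make the product term large via a free coefficient $c$, whereas the paper fixes that coefficient to $1$ and instead sets $\mI{o} = Q(1,\dots,1)+1$.
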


\begin{proof}
Use the following monotone interpretation over $\NN$
\begin{align*}
\mI{o} &= Q(1,\dots\mathrlap{,1) + 1} &
\mI{a}(x,y) &= x + y &
\mI{v_{\mathit{i}}}(x) &= x \quad \text{for all $i \in \set$} \\
\mI{z} &= 0 &
\mI{f}(x_1,x_2,x_3,x_4) &= x_3x_4 + x_1 \mathrlap{{} + x_2 + x_3 + x_4}
\end{align*}
Note that due to $\vI{\m{v_{\mathit{i}}}(x)}^x = 1$ we have
$\vI{\ceil{P}^{y_3}}^{y_3} = P(1,\dots,1)$ and
$\vI{\ceil{Q}^{y_3}}^{y_3} = Q(1,\dots,1)$.
Hence, we can orient the rule as seen in
\begin{align*}
\vI{\ell} &= (Q(1,\dots,1) + 1) (P(1,\dots,1) + 1) y_3 + {} \\
&\phantom{{} = {}} y_1 + y_2 + (P(1,\dots,1) + 1) y_3 + (Q(1,\dots,1) + 1)
\\
&> y_1 + y_2 + (Q(1,\dots,1) + 1) y_3 = \vI{r} \qedhere
\end{align*}
\end{proof}

\begin{corollary}
Linear termination is undecidable, even for polynomially
terminating single-rule systems.
\end{corollary}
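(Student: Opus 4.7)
The plan is to combine the three previous results into a single reduction. Given an instance $(P,Q)$ of the decision problem from \lemref{DP}, I would associate the single-rule TRS $\xR$ from \secref{undecidability}. The map $(P,Q) \mapsto \xR$ is clearly computable, so it defines a many-one reduction.

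First, I would observe that every TRS produced in this way lies in the promised restricted class. By \lemref{RisPT}, the system $\xR$ is polynomially terminating regardless of the choice of $P$ and $Q$, so the reduction lands inside the class of polynomially terminating single-rule TRSs. Next, by \thmref{linPT}, $\xR$ admits a linear interpretation proving termination if and only if $P(\seq{v}) \geqslant Q(\seq{v})$ holds for some $\seq{v} \in \NN_+$, which is exactly the property we wanted to decide.

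Finally, I would argue by contradiction: if linear termination were decidable on the class of polynomially terminating single-rule TRSs, then composing the decision procedure with the reduction above would decide the problem of \lemref{DP}, contradicting its undecidability. Since the reduction always outputs a TRS in the restricted class, the corollary follows immediately; no new calculation beyond what has already been done is required. There is no real obstacle here, because the heavy lifting has been carried out in \thmref{linPT} and \lemref{RisPT}; the only thing to check is that the reduction indeed targets the desired subclass, which \lemref{RisPT} guarantees.
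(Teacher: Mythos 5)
Your proposal is correct and matches the paper's (implicit) argument exactly: the corollary follows by combining the reduction of \lemref{DP} with the equivalence of \thmref{linPT}, using \lemref{RisPT} to ensure the constructed single-rule TRS $\xR$ always lies in the class of polynomially terminating systems. Nothing further is needed.
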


\section{Polynomial Termination over $\QQ$ and $\RR$}
\label{sec:R and Q}

When considering polynomial interpretations over $\QQ$ and $\RR$, we
restrict the domain to all non-negative values. Moreover, when comparing the
polynomials associated with the left- and right-hand side of a rewrite rule,
we demand that the difference is at least $\delta$, for some fixed positive
value $\delta$ of the domain. This ensures termination. We refer to \cite{NM14}
for formal definitions as well as the relationship
between the notions of polynomial termination over $\NN$, $\QQ$ and $\RR$.

In the previous section we encoded polynomials as terms such that
indeterminates of the polynomials correspond to coefficients of some
interpretation.
When dealing with polynomial termination over $\QQ$ and $\RR$
a new approach for proving undecidability is required,
since coefficients take values in $\QQ$ or $\RR$.
However, what does not change is that the exponents of our
interpretations must still be natural numbers. We can make use of this by
encoding the polynomials and the order on polynomials in the degrees of
our interpretations. As long as we can represent multiplication in the
interpretations we can use basic arithmetic to encode the polynomials in
the degrees.

\begin{lemma}
If $P$ and $Q$ are univariate polynomials containing only positive
coefficients then
\begin{enumerate}
\item
$\deg(P + Q) = \max(\deg(P),\deg(Q))$,
\smallskip
\item
$\deg(P \cdot Q) = \deg(P) + \deg(Q)$, and
\smallskip
\item
$\deg(P \circ Q) = \deg(P) \cdot \deg(Q)$. \qed
\end{enumerate}
\end{lemma}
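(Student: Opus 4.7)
The plan is to exploit the fact that with only positive coefficients no cancellation can occur, so each leading term survives. For each of the three parts I would identify the term that should produce the claimed degree and show that its coefficient is a sum of nonnegative numbers with at least one strictly positive summand, hence nonzero.

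For part (1), write $P = \sum_{i=0}^{p} a_i x^i$ and $Q = \sum_{i=0}^{q} b_i x^i$ with $a_p, b_q > 0$ and all other $a_i, b_i \geqslant 0$. Assume without loss of generality $p \geqslant q$. Then the coefficient of $x^p$ in $P+Q$ equals $a_p + b_p$, which is positive since $a_p > 0$ and $b_p \geqslant 0$. All coefficients of $x^i$ with $i > p$ are zero, giving $\deg(P+Q) = p = \max(\deg(P),\deg(Q))$.

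For part (2), the standard argument works: the coefficient of $x^{p+q}$ in $P \cdot Q$ is $a_p \cdot b_q$, which is positive since both factors are positive. No higher coefficients are present. For part (3), writing $P \circ Q = \sum_{i=0}^{p} a_i Q^i$, I would isolate the top summand $a_p \cdot Q^p$. By iterating part (2), $\deg(Q^p) = p \cdot q$ and its leading coefficient is $b_q^{p}$, so $a_p Q^p$ contributes the positive value $a_p \cdot b_q^p$ as the coefficient of $x^{pq}$. Each lower summand $a_i Q^i$ with $i < p$ has degree $iq < pq$, so contributes $0$ to the coefficient of $x^{pq}$. Summing over $i$ keeps the coefficient of $x^{pq}$ equal to $a_p \cdot b_q^p > 0$, and all higher coefficients vanish.

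No step is really hard; the only point requiring care is to notice that positive (as opposed to arbitrary real) coefficients prevent the cancellations that would otherwise only give $\leqslant$ in (1) and (3). Part (3) is the most involved because it relies on part (2) to determine $\deg(Q^p)$ and its leading coefficient, but the positivity assumption again ensures $b_q^p > 0$ so no cancellation can occur when the contributions of the various $a_i Q^i$ are summed.
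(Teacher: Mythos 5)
The paper states this lemma without proof (the \qed appears in the statement itself), treating it as standard, and your leading-coefficient argument is exactly the intended justification: positivity of all coefficients rules out cancellation, so the top-degree terms survive in each case. The only point worth a passing remark is the degenerate case $\deg(Q)=0$ in part (3), where your strict inequality $iq<pq$ fails; but there every summand $a_iQ^i$ is a positive constant, so $P\circ Q$ is a positive constant of degree $0=\deg(P)\cdot\deg(Q)$ and the conclusion still holds.
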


For encoding polynomials with positive coefficients
as terms we use \Cref{def:encoding}, so using function symbols from
$\SET{\fz,\fa} \cup \SET{\fv{i} \mid 1 \leqslant i \leqslant n}$.
Moreover, we write $\enc{P}$ for $\enc{P}^x$ with some fixed variable $x$.
The polynomial is then encoded in the degree of $\enc{P}$, as seen in the
following lemma, which can be proved using a simple induction over
\Cref{def:encoding}.

\begin{lemma}
\label{lem:deg encoding}
Let $\DD \in \SET{\QQ,\RR}$ and suppose $\mI[\DD]{z} = z_0$ and
$\mI[\DD]{a} = a_3 x y + a_2 x + a_1 y + a_0$ for some
$z_0, a_0 \in \DD_{\geqslant 0}$ and $a_3, a_2, a_1 \in \DD_{> 0}$.
If $P \in \ZZ[\seq{v}]$ with positive coefficients then
$\deg(\sem[\DD]{\enc{P}}) =
P(\deg(\sem[\DD]{\fv{1}}),\dots,\deg(\sem[\DD]{\fv{n}}))$.
\end{lemma}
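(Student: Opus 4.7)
The plan is to proceed by structural induction over \Cref{def:encoding}, handling in turn the encodings of natural numbers, monomials, and sums of monomials. The argument rests on three elementary degree identities for the given interpretation. First, because $a_3 > 0$ and $\DD$ is an integral domain, for arbitrary terms $s,t$ the dominating summand of $\sem[\DD]{\fa(s,t)} = a_3 \sem[\DD]{s}\sem[\DD]{t} + a_2 \sem[\DD]{s} + a_1 \sem[\DD]{t} + a_0$ cannot be cancelled, yielding
\[
\deg(\sem[\DD]{\fa(s,t)}) = \deg(\sem[\DD]{s}) + \deg(\sem[\DD]{t}).
\]
Second, since $\sem[\DD]{\fv{i}(t)}$ is the univariate polynomial $\sem[\DD]{\fv{i}}$ composed with $\sem[\DD]{t}$, composition over the integral domain $\DD$ gives $\deg(\sem[\DD]{\fv{i}(t)}) = \deg(\sem[\DD]{\fv{i}}) \cdot \deg(\sem[\DD]{t})$. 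Third, $\deg(\sem[\DD]{\fz}) = \deg(z_0) = 0$.

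Writing $d_i = \deg(\sem[\DD]{\fv{i}})$, the induction itself is a routine calculation. A side induction on $m$ using the first and third identities shows $\deg(\sem[\DD]{\enc{m}}) = m$ for every $m \in \NN$. For a monomial $M = c \cdot v_1^{m_1} \cdots v_n^{m_n}$, iterating the composition identity across the nested applications of the $\fv{i}$ symbols in $\enc{M}$ and then invoking the numeric case on $\enc{c}$ yields
\[
\deg(\sem[\DD]{\enc{M}}) = d_1^{m_1} \cdots d_n^{m_n} \cdot c = M(d_1,\dots,d_n).
\]
For a polynomial $P = M_1 + \cdots + M_k$, unfolding the nested $\fa$'s in $\enc{P}$ via the first identity gives $\deg(\sem[\DD]{\enc{P}}) = \sum_{i=1}^{k} \deg(\sem[\DD]{\enc{M_i}}) + \deg(\sem[\DD]{\fz}) = \sum_{i=1}^{k} M_i(d_1,\dots,d_n) = P(d_1,\dots,d_n)$, as required.

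The only subtle point is the composition identity for $\fv{i}$, because the lemma places no assumption on the interpretations of the unary symbols beyond their being polynomials over $\DD$. The identity relies on working in an integral domain: the leading coefficient of $\sem[\DD]{\fv{i}} \circ \sem[\DD]{t}$ is the product of the leading coefficient of $\sem[\DD]{\fv{i}}$ with an appropriate power of the leading coefficient of $\sem[\DD]{t}$, and neither factor can vanish in $\QQ$ or $\RR$. With this in place, the three induction steps go through by direct computation.
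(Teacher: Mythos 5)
Your proof is correct and follows essentially the same route as the paper: a structural induction over \Cref{def:encoding} through the cases of natural numbers, monomials, and sums, driven by the degree identities for sum, product, and composition (the paper states these in the lemma immediately preceding \lemref{deg encoding}). Your explicit justification of the composition identity for the $\fv{i}$ via leading coefficients in an integral domain is a welcome addition but does not change the argument.
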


\begin{proof}
We use induction on the definition of $\enc{P}$.
If $P = 0$ then $\sem[\DD]{\enc{P}} = \fz$ and thus
$\deg(\sem[\DD]{\enc{P}}) = 0 = P$. For $P = m+1$ we obtain
$\enc{P} = \m{a}(x,\enc{m})$ and thus
\[
\sem[\DD]{\enc{P}} = a_3 \cdot \sem[\DD]{\enc{m}} \cdot x + a_2 \cdot x +
a_1 \cdot \sem[\DD]{\enc{m}} + a_0
\]
Hence $\deg(\sem[\DD]{\enc{P}}) = \deg(\sem[\DD]{\enc{m}}) + 1 = m + 1$
by the induction hypothesis. For a monomial
$P = c \cdot v_1^{m_1} \cdots v_n^{m_n}$ with
$c \in \NN_+$ and $\seq{m} \in \NN$ we obtain
\begin{align*}
\deg(\sem[\DD]{\enc{P}}) &= \deg(\sem[\DD]{\enc{c}}) \cdot
\deg(\sem[\DD]{\fv{1}})^{m_1}\,\cdots\,\deg(\sem[\DD]{\fv{n}})^{m_n} \\
&= c \cdot
\deg(\sem[\DD]{\fv{1}})^{m_1}\,\cdots\,\deg(\sem[\DD]{\fv{n}})^{m_n} \\
&= P(\deg(\sem[\DD]{\fv{1}}),\dots,\deg(\sem[\DD]{\fv{n}}))
\end{align*}
Finally, if $P = M_1 + \cdots + M_k$ then
$\enc{P} = \m{a}(\enc{M_1}, \cdots \m{a}(\enc{M_k},\m{z}) \cdots )$
and thus
\begin{align*}
\deg{\sem[\DD]{\enc{P}}} &=
\deg(\sem[\DD]{\enc{M_1}}) + \cdots + \deg(\sem[\DD]{\enc{M_k}}) \\
&= M_1(\deg(\sem[\DD]{\fv{1}}),\dots,\deg(\sem[\DD]{\fv{n}})) + \cdots +
M_k(\deg(\sem[\DD]{\fv{1}}),\dots,\deg(\mathrlap{\sem[\DD]{\fv{n}}))} \\
&= P(\deg(\sem[\DD]{\fv{1}}),\dots,\deg(\sem[\DD]{\fv{k}})) \qedhere
\end{align*}
\end{proof}

\begin{definition}
Given polynomials $P$ and $Q$ containing only positive coefficients and
containing the indeterminates $\seq{v}$, the \textup{TRS} $\xQ$ is defined
over the signature
$\xF = \SET{\fz,\fa,\fh,\fq,\fg} \cup \SET{\fv{i} \mid i \in \set}$ and
consists of the rules
\begin{align*}
\fq(\fh(x)) &\nR{1} \fh(\fh(\fq(x))) &
\fa(x,x) &\nR{5} \fq(x) \\
\fh(x) &\nR{2} \fg(x,x) &
\fh(x) &\nR{6} \fa(\fz,x) \\
\fg(\fq(x),\fh(\fh(\fh(x)))) &\nR{3} \fq(\fg(x,\fh(\fz))) &
\fh(x) &\nR{7} \fa(x,\fz) \\
\fh(\fq(x)) &\nR{4} \fa(x,x) &
\fh(\fa(\enc{P},x)) &\nR{8} \fa(\enc{Q},x) 
\end{align*}
\end{definition}

The main idea behind this system is that rules (1) through (7)
restrict the possible interpretations of $\mI[\DD]{a}$ and $\mI[\DD]{h}$
such that \lemref{deg encoding} is applicable, and that compatibility with
rule (8) implies $P(\seq{v}) \geqslant Q(\seq{v})$.
The rules (1)--(7) are similar to ones already used in \cite{NMZ10},
where they restrict possible interpretations over $\NN$,
and in \cite{NM14}, where they are also applied to interpretations over
$\QQ$ and $\RR$.

\begin{theorem}
\label{thm:QR is undecidable}
For $\DD \in \SET{\QQ,\RR}$ and polynomials $P, Q \in \ZZ[\seq{x}]$
with positive integer coefficients the \textup{TRS} $\xQ$ can be
proved terminating
using a polynomial interpretation over $\DD$ if and only if
$P(\seq{v}) \geqslant Q(\seq{v})$ for some $\seq{v} \in \NN_+$.
\end{theorem}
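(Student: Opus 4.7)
My plan treats the two directions of the biconditional separately, with \lemref{deg encoding} providing the bridge between the interpretation and the polynomial comparison $P \geqslant Q$.

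For the if direction I would assume $P(\seq{v}) \geqslant Q(\seq{v})$ for some $\seq{v} \in \NN_+$ and construct an explicit polynomial interpretation over $\NN$ (hence over $\DD$) compatible with every rule of $\xQ$. Natural choices are $\mI[\DD]{z} = 0$, $\mI[\DD]{a}(x,y) = xy + x + y$, and $\mI[\DD]{v_{\mathit{i}}}(x) = x^{v_i}$, so that $\deg(\sem[\DD]{\fv{i}}) = v_i$; the interpretations of $\fh$, $\fq$, and $\fg$ would follow the patterns used for the analogous rules in \cite{NMZ10,NM14}, giving $\deg(\mI[\DD]{h}) = 1$ and orienting rules (1)--(7) with the required $\delta$-gap. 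Rule~(8) then follows because the hypothesis $P(\seq{v}) \geqslant Q(\seq{v})$ combined with \lemref{deg encoding} yields $\deg(\sem[\DD]{\enc{P}}) \geqslant \deg(\sem[\DD]{\enc{Q}})$, which suffices once the constant parts of $\mI[\DD]{h}$ and $\mI[\DD]{a}$ are tuned to secure the $\delta$-gap.

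For the only-if direction I would fix a polynomial interpretation $[\cdot]_{\DD}$ compatible with $\xQ$ and extract a witness. Strict monotonicity of each $\mI[\DD]{v_{\mathit{i}}}$ forces $v_i := \deg(\sem[\DD]{\fv{i}}) \geqslant 1$, so $\seq{v} \in \NN_+$ is the natural candidate. The crux is to show that rules (1)--(7) constrain $\mI[\DD]{a}$ to the bilinear shape $a_3 xy + a_2 x + a_1 y + a_0$ with $a_3, a_2, a_1 > 0$ and $a_0 \geqslant 0$ required by \lemref{deg encoding}, together with $\deg(\mI[\DD]{h}) = 1$. The $\delta$-condition on the unbounded domain $\DD_{\geqslant 0}$ implies, for every rule $\ell \to r$, the inequality $\deg(\sem[\DD]{\ell}) \geqslant \deg(\sem[\DD]{r})$. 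Applied to rule~(1) this gives $\deg(\mI[\DD]{q}) \deg(\mI[\DD]{h}) \geqslant \deg(\mI[\DD]{h})^2 \deg(\mI[\DD]{q})$, whence $\deg(\mI[\DD]{h}) = 1$; the combined effect of rules~(2)--(7) then forces $\mI[\DD]{a}$ into the required bilinear shape, along the lines of \cite{NM14}, an argument that transports to $\DD \in \SET{\QQ,\RR}$ because the relevant degree arithmetic is insensitive to the coefficient field.

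Once the shape constraints are in place, rule~(8) yields $\deg(\sem[\DD]{\fh(\fa(\enc{P},x))}) \geqslant \deg(\sem[\DD]{\fa(\enc{Q},x)})$. Using $\deg(\mI[\DD]{h}) = 1$ together with $\deg(\sem[\DD]{\fa(t,x)}) = \deg(\sem[\DD]{t}) + 1$ (valid because $\mI[\DD]{a}$ is bilinear with a positive $xy$-coefficient), this reduces to $\deg(\sem[\DD]{\enc{P}}) \geqslant \deg(\sem[\DD]{\enc{Q}})$, and applying \lemref{deg encoding} to each side gives $P(\seq{v}) \geqslant Q(\seq{v})$. The main obstacle will be the shape analysis of rules~(1)--(7): translating $\delta$-compatibility into precise coefficient and degree constraints requires a careful combination of degree bookkeeping with leading-coefficient comparison, and relies on carefully adapting the analogous analysis from the cited prior work to interpretations over the rationals or reals.
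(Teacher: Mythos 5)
Your plan follows the paper's proof almost step for step: the same if/only-if decomposition, \lemref{deg encoding} as the bridge between degrees and the inequality $P(\seq{v}) \geqslant Q(\seq{v})$, rule (1) forcing $\deg(\mI[\DD]{h}) = 1$, rules (2)--(7) forcing the bilinear shape of $\mI[\DD]{a}$, and rule (8) transferring the degree comparison. Two points, though. First, the one interpretation you actually commit to, $\mI[\DD]{a}(x,y) = xy + x + y$, cannot orient rule (5): at $x = 0$ the left-hand side $\sem[\DD]{\fa(x,x)}$ evaluates to $0$, while $\sem[\DD]{\fq(x)}$ is non-negative on the domain $\DD_{\geqslant 0}$, so no positive $\delta$-gap is achievable no matter how $\fq$ is interpreted. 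The paper takes $\mI[\DD]{a}(x,y) = xy + x + y + 1$ (together with $\mI[\DD]{q}(x) = x^2 + 2x$, $\mI[\DD]{g}(x,y) = x + y$, $\mI[\DD]{h}(x) = hx + h$ with $h > 2$, and $\delta = 1$); the constant $+1$ is essential for rule (5), not a tuning afterthought confined to rule (8). Second, the crux you defer to ``adapting the cited prior work'' --- deriving the exact shape of $\mI[\DD]{a}$ --- is precisely what the paper works out in detail: rule (2) gives $\deg(\mI[\DD]{g}) = 1$, $h_1 \geqslant 2$ and $h_0 > 0$; rule (1) then shows $\mI[\DD]{q}$ is at least quadratic; rule (3) (using $h_0 > 0$ and $\sem[\DD]{\fh(\fz)} > 0$) pins $\deg(\mI[\DD]{q}) = 2$; rules (4) and (5) give $\deg(\mI[\DD]{a}) = 2$; and rules (6) and (7) kill the $x^2$ and $y^2$ coefficients. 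Since this chain is the technical heart of the only-if direction, a complete proof must carry it out rather than cite it; apart from that and the missing constant in $\mI[\DD]{a}$, your outline is sound and strategically identical to the paper's argument.
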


\begin{proof}
For the if direction assume $P(\seq{v}) \geqslant Q(\seq{v})$ for some
$\seq{v} \in \NN_+$. Take the interpretations
\begin{align*}
\mI[\DD]{z} &= 0 &
\mI[\DD]{g}(x,y) &= x + y &
\mI[\DD]{a}(x,y) &= xy + x + y + 1 \\
\mI[\DD]{q}(x) &= x^2 + 2x &
\mI[\DD]{h}(x) &= hx + h &
\mI[\DD]{v_{\mathit{i}}}(x) &= x^{v_i} \quad \text{for all $i \in \set$}
\end{align*}
where $h > 2$. With $\delta = 1$ these orient the rules
(1) -- (7). \lemref{deg encoding} yields
$\deg(\sem[\DD]{\enc{P}}) =
P(\deg(\sem[\DD]{\fv{1}}),\dots,\deg(\sem[\DD]{\fv{n}})) =
P(\seq{v})$ and similarly $\deg(\sem[\DD]{\enc{Q}}) = Q(\seq{v})$.
From the assumption $P(\seq{v}) \geqslant Q(\seq{v})$ we therefore
obtain $\deg(\sem[\DD]{\enc{P}}) \geqslant \deg(\sem[\DD]{\enc{Q}})$.
Consequently,
\[
\deg(\sem[\DD]{\fh(\fa(\enc{P},x))}) = \deg(\sem[\DD]{\enc{P}}) + 1
\geqslant \deg(\sem[\DD]{\enc{Q}}) + 1 = \sem[\DD]{\fa(\enc{Q},x)}
\]
It follows that by choosing the coefficient $h$ large enough,
the remaining rule (8) is oriented.

For the only-if direction assume the TRS $\xQ$ is
polynomially terminating over $\DD$. From compatibility
with rule (1) we infer
\[
\deg(\sem[\DD]{\fq(\fh(x))}) =
\deg(\mI[\DD]{q}) \cdot \deg(\mI[\DD]{h}) \geqslant
\deg(\mI[\DD]{q}) \cdot \deg(\mI[\DD]{h})^2 =
\deg{\sem[\DD]{\fh(\fh(\fq(x)))}}
\]
Hence $\deg(\mI[\DD]{h}) = 1$ and thus $\mI[\DD]{h}(x) = h_1 x + h_0$
for some $h_1 \geqslant 1$ and $h_0 \geqslant 0$. From compatibility
with rule (2) we infer $\deg(\mI[\DD]{g}) = 1$ and thus
$\mI[\DD]{g}(x,y) = g_2 x + g_1 y + g_0$ with $g_2, g_1 \geqslant 1$.
Moreover, $h_1 \geqslant g_1 + g_2 \geqslant 2$ and
$h_0 >_\delta g_0 \geqslant 0$. Looking back at rule (1) we now
can infer that $\mI[\DD]{q}$ is at least quadratic, since if it were
linear we would obtain the inequality
\[
q_1 h_1 \cdot x + q_1 h_0 + q_0 >_\delta
q_1 h_1^2 \cdot x + h_1^2 q_0 + h_1 h_0 + h_0
\]
for all $x \in \DD_{\geqslant 0}$. This can only hold if
$q_1 h_1 \geqslant q_1 h_1^2$, which in turn implies $h_1 \leqslant 1$,
contradicting $h_1 \geqslant 2$. Next we show
$\deg(\mI[{\DD}]{q}) = 2$.
Compatibility with rule (3) induces the constraint
$g_2 \cdot \sem[\DD]{\fq(x)} + g_1 \cdot
\sem[\DD]{\fh(\fh(\fh(x)))} + g_0 >_\delta
\sem[\DD]{\fq(\fg(x,\fh(\fz)))}$, which implies
\[
1 = \deg(\sem[\DD]{\fh(\fh(\fh(x)))} \geqslant
\deg(\sem[\DD]{\fq(\fg(x,\fh(\fz)))} - g_2 \cdot \sem{\fq(x)})
\]
Since $h_0 > 0$ and $\sem[\DD]{\fh(\fz)} > 0$ this can only be the
case if $\deg(\mI[{\DD}]{q}) = 2$. Using this fact together with
compatibility with the rules (4) and (5) we
infer $\deg(\mI[{\DD}]{a}) = 2$ and hence
$\mI[{\DD}]{a}(x,y) = a_5x^2 + a_4y^2 + a_3xy + a_2x + a_1y + a_0$.
Compatibility with rules (6) and (7) implies
$a_5 = a_4 = 0$ resulting in
$\mI[{\DD}]{a}(x,y) = a_3xy + a_2x + a_1y + a_0$ with
$a_3, a_2, a_1 \in \DD_{> 0}$. Compatibility with (8) implies
$\deg(\sem[\DD]{\enc{P}}) + 1 \geqslant \deg(\sem[\DD]{\enc{Q}}) + 1$.
With the help of \lemref{deg encoding} we obtain
$P(\deg(\sem[\DD]{\fv{1}}),\dots,\deg(\sem[\DD]{\fv{n}})) \geqslant
Q(\deg(\sem[\DD]{\fv{1}}),\dots,\deg(\sem[\DD]{\fv{n}}))$.
\end{proof}

\begin{corollary}
Polynomial termination over $\QQ$ and $\RR$ is undecidable. \qed
\end{corollary}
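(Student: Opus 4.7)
The plan is to derive the corollary as an immediate consequence of the two preceding results. Given an arbitrary instance $(P,Q)$ of the decision problem of \lemref{DP}, I would explicitly construct the TRS $\xQ$ from $P$ and $Q$. This construction is effective: the seven rules (1)--(7) form a fixed schema independent of the input, while rule (8) is built by mechanically applying \Cref{def:encoding} to obtain the terms $\enc{P}$ and $\enc{Q}$, together with a finite signature $\xF$ determined by the number of indeterminates of $P$ and $Q$.

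Having established that the reduction $(P,Q) \mapsto \xQ$ is computable, I would invoke \thmref{QR is undecidable} directly: for $\DD \in \SET{\QQ,\RR}$ the TRS $\xQ$ admits a polynomial interpretation over $\DD$ witnessing termination if and only if $P(\seq{v}) \geqslant Q(\seq{v})$ holds for some $\seq{v} \in \NN_+$. A decision procedure for polynomial termination over $\DD$ would therefore yield a decision procedure for the problem of \lemref{DP}, contradicting its undecidability.

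There is no genuine obstacle here, since all the substantive content is contained in \thmref{QR is undecidable} and \lemref{DP}. The only sanity check worth performing is that $\xQ$ is indeed a finite TRS over a finite signature for every input $(P,Q)$, which is immediate from \Cref{def:encoding}, as $\enc{P}$ and $\enc{Q}$ are finite terms built from finitely many function symbols.
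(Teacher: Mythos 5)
Your proposal is correct and matches the paper's intent exactly: the corollary is meant to follow immediately by combining the computable reduction $(P,Q) \mapsto \xQ$ with \thmref{QR is undecidable} and the undecidability of the problem in \lemref{DP}, precisely as you argue. The paper leaves this step implicit (marking the corollary with \qed), so your explicit check that the reduction is effective is a harmless elaboration of the same route.
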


\bibliography{references}

\end{document}